\def\CA{{\cal A}}
\def\CB{{\cal B}}
\def\order{\mathop{\mathrm{O}}}
\def\Z{{\cal Z}}
\def\ZO{{\cal Z\!\!O}}
\def\Q{\!/\!\!\!\sim}
\def\past{{\sf{Past}}}
\def\future{{\sf{Fut}}}
\def\sink{{\sf{Sink}}}
  \def\@thmcountersep{.}
\newtheorem{theorem}{Theorem}
\newtheorem{corollary}{Corollary}
\newtheorem{lemma}{Lemma}
\newtheorem{example}{Example}
\newtheorem{remark}{Remark}
\newtheorem{definition}{Definition}
\newtheorem{proposition}{Proposition}
\newcommand{\FO}{\mathop{\mathrm{FO}}\nolimits}
\newcommand{\MSO}{\mathop{\mathrm{MSO}}\nolimits}
\newcommand{\bool}{\mathbb{B}}
\newcommand{\limn}{\lim_{n \rightarrow \infty}}
\newcommand{\ie}{i.e., \;}
\newcommand{\eg}{e.g., \;}
\newcommand{\cf}{{\it cf.\;}}
\newcommand{\sect}[1]{\noindent{\bf #1.\;}}
\newcommand{\con}[1]{\textcolor{red}{\textcircled{\scriptsize\ref{condition:#1}}}}
\def\Hline{%
\noalign{\ifnum0=`}\fi\hrule \@height 1pt \futurelet
\reserved@a\@xhline}
\def\iroha#1{%
  \ifcase#1\or い\or ろ\or は\or に\else\@ctrerr\fi}
\def\maru#1{\textcircled{\scriptsize#1}}
\begin{document}
\title{An Automata Theoretic Approach to the Zero-One Law for Regular Languages:
Algorithmic and Logical Aspects}
\author{Ryoma Sin'ya
\institute{Tokyo Institute of Technology.}
\email{shinya.r.aa@m.titech.ac.jp}
\institute{\'Ecole Nationale Sup\'erieure des T\'el\'ecommunications.}
\email{rshinya@enst.fr}}
\def\titlerunning{{An Automata Theoretic Approach to the Zero-One Law for Regular Languages}}
\def\authorrunning{R. Sin'ya}

\maketitle

\begin{abstract}
 A zero-one language $L$ is a regular language whose asymptotic probability
 converges to either zero or one. In this case, we say that $L$ obeys
 the zero-one law.
 We prove that a regular language obeys the zero-one law if and only if its
 syntactic monoid has a zero element, by means of Eilenberg's
 variety theoretic approach.
 Our proof gives an effective automata characterisation of the zero-one
 law for regular languages, and it leads to a linear time algorithm for
 testing whether a given regular language is zero-one. 
 In addition, we discuss the logical aspects of the zero-one law for
 regular languages.
\end{abstract}
\section{Introduction}\label{introduction}
Let $L$ be a regular language over a non-empty finite alphabet $A$.
Recall that the {\it counting function} $\gamma_n(L)$ of $L$ counts
the number of different words of length $n$ in $L$: $\gamma_n(L) = |L
\cap A^n|$ where $A^n$ is the set of all words of length $n$ over $A$.
The {\it probability function} $\mu_n(L)$ of $L$ is the fraction
defined by
\[
 \mu_n(L) = \frac{\gamma_n(L)}{\gamma_n(A^*)} = \frac{|L \cap A^n|}{|A^n|}.
\]
The {\it asymptotic probability} $\mu(L)$ of $L$ is defined by $\mu(L) =
\limn \mu_n(L)$, if the limit exists. 
We can regard $\mu_n(L)$ as the {\it probability} that a randomly chosen
word of length $n$ is in $L$, and $\mu(L)$ as its {\it asymptotic
probability}. Here we introduce a new class of regular languages which
is the main target of this paper.

\begin{definition}[zero-one language]\upshape
A {\it zero-one language} $L$ is a regular language whose asymptotic probability
 $\mu(L)$ is either zero or one. In this case, we say that $L$ {\it
 obeys the zero-one law}. We denote by $\ZO$ {\it the class of all
 regular zero-one languages}.
\end{definition}

As we will describe later (see Section \ref{logics}), the notion of ``zero-one
law'' defined here is a fundamental object in {\it finite model theory}. 

\begin{example}\label{ex:probability}\upshape
We now consider a few examples.
\begin{itemize}
 \item The set of all words $A^*$ over $A$ satisfies $\mu(A^*) = 1$, and
	   its complement $\emptyset$ satisfies $\mu(\emptyset) = 0$. These
	   two languages obey the zero-one law.
 \item Consider $aA^*$ the set of all words which start with the letter
	   $a$ in $A$. Then 
	   \[
		\mu_n(aA^*) = \frac{|aA^{n-1}|}{|A^n|} = \frac{1}{|A|}.
	   \]
	   Hence, its limit $\mu((aA)^*)$ is $1/|A|$ and $aA^*$ is zero-one
	   if and only if $A$ is {\it unary}: $A = \{a\}$.
 \item Consider $(AA)^*$ the set of all words with even length. Then
	   \[
		\mu_n((AA)^*) = \begin{cases}
						 1 & \text{if} \;\; n \;\; \text{is even,}\\
						 0 & \text{if} \;\; n \;\; \text{is odd.}
						\end{cases}
	   \]
	   Hence, its limit $\mu((AA)^*)$ does not exist. 
\end{itemize}
\end{example}
Thus, for some regular language $L$, the asymptotic probability $\mu(L)$ is
either zero or one, for some, like $L = aA^*$ where $|A| \geq 2$, $\mu(L)$ could
be a real number between zero and one, and for some, like $L =
(AA)^*$, it may not even exist. It is previously known that there exists a cubic
time algorithm computing $\mu(L)$ for any regular language $L$
(\cite{Bodirsky}, see Section \ref{conclusion}).\\

\sect{Our results and contributions}
In this paper, we show that the following class of languages exactly
captures the zero-one law for regular languages.
\begin{definition}[\cite{MPRI}]\upshape
A {\it language with zero} is a regular language whose syntactic monoid
 has a zero element.
 We denote by $\Z$ {\it the class of all regular languages with zero}.
\end{definition}

More precisely, we prove the following theorem, which states that $\ZO$
and $\Z$ are equivalent by means of a transparent condition of their
automata: {\it zero automata} (Section \ref{synchronisation}) and {\it
quasi-zero automata} (Section \ref{algorithm}) which will be described
later.
The remarkable fact is that, $\ZO = \Z$ holds even though these two
notions seem completely different from each other; $\ZO$ is defined by
the asymptotic behavior of its probability, $\Z$ is defined by the
existence of a zero of its syntactic monoid.

\begin{theorem}\label{thm:zero}
Let $L$ be a regular language and $\CA_L$ be the minimal automaton of L.
Then the following four conditions are equivalent.
\begin{enumerate}
\renewcommand{\labelenumi}{\maru{\arabic{enumi}}}
 \item $\CA_L$ is zero. \label{condition:za}
 \item $L$ is with zero. \label{condition:zero}
 \item $L$ obeys the zero-one law. \label{condition:zeroone}
 \item $L$ is recognised by a quasi-zero automaton. \label{condition:quasi}
\end{enumerate}
\end{theorem}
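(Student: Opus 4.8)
The plan is to establish the four conditions by treating \maru{1}$\Leftrightarrow$\maru{2} as a dictionary between the minimal automaton and its transition monoid, proving \maru{2}$\Rightarrow$\maru{3} by an elementary factor-occurrence estimate, and proving \maru{3}$\Rightarrow$\maru{1} by analysing the random walk that a uniform word induces on $\CA_L$; condition \maru{4} is then grafted on for the algorithm. For \maru{1}$\Leftrightarrow$\maru{2}, recall that the syntactic monoid of $L$ is the transition monoid of $\CA_L$, so a zero of this monoid is a transformation that is absorbing on both sides. I would show such an element exists precisely when $\CA_L$ has a single absorbing sink state $q^*$ to which every state can be driven by one common word: given a unique absorbing sink reachable from all states, one builds a synchronising word to $q^*$ by steering the states there one at a time (those already at $q^*$ stay, since $q^*$ is absorbing), and the resulting constant-$q^*$ transformation is a two-sided zero; conversely a zero is exactly such a constant map. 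This absorbing-sink picture is, I expect, what \emph{zero automaton} names, giving \maru{1}$\Leftrightarrow$\maru{2}.

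For \maru{2}$\Rightarrow$\maru{3}, let $\eta : A^* \to M$ be the syntactic morphism, $z \in M$ the zero, and $P = \eta(L)$. Pick any word $w$ with $\eta(w) = z$; since $z$ is absorbing, every word containing $w$ as a factor maps to $z$. The probability that a uniform random word of length $n$ avoids the fixed factor $w$ decays exponentially, so
\[
  \mu_n\bigl(\eta^{-1}(z)\bigr) \longrightarrow 1 .
\]
The words outside $\eta^{-1}(z)$ contribute vanishing $\mu_n$-mass, and on $\eta^{-1}(z)$ membership in $L$ is constantly true or false according to whether $z \in P$. Hence $\mu(L)$ exists and equals $1$ when $z \in P$ and $0$ otherwise, so $L$ is zero-one.

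The converse \maru{3}$\Rightarrow$\maru{1} is the crux and the step I expect to fight hardest. I would view the uniform random word as driving the Markov chain $X_0 = q_0$, $X_{k+1} = \delta(X_k, a)$ with $a$ uniform, so that $\mu_n(L) = \Pr[X_n \in F]$, and study the bottom (recurrent) strongly connected components into which this chain is absorbed. Minimality is the engine here: an absorbing component all of whose states are final would accept every continuation, forcing those states to be Myhill--Nerode equivalent, so in $\CA_L$ every bottom component is either a single absorbing state or contains at least two states of which some are final and some are not, and there is at most one all-accepting and one all-rejecting sink. If the reachable recurrent part is a single absorbing state we land in case \maru{1} and, as above, $\mu(L) \in \{0,1\}$. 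In every other case the limiting behaviour splits: a multi-state bottom component carries, by irreducibility, a stationary distribution of full support on it, and since it mixes final and non-final states it contributes $F$-mass strictly between $0$ and $1$; two distinct reachable sinks of opposite type split the mass likewise; and when the relevant period is nontrivial (as for $(AA)^*$) the sequence $\mu_n(L)$ oscillates and no limit exists. In all of these, $L$ fails \maru{3}, which is the contrapositive we want.

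The delicate points are concentrated in this last step: making the absorption-and-stationarity analysis of the chain rigorous, treating uniformly the two ways the zero-one law can break (a limit strictly inside $(0,1)$ versus no limit at all because of periodicity), and verifying that a nontrivial bottom component really does put strictly positive mass on both $F$ and its complement. Once \maru{1}$\Leftrightarrow$\maru{2}$\Leftrightarrow$\maru{3} is in place, I would close with \maru{4}: a quasi-zero automaton should be a not-necessarily-minimal automaton exhibiting the same single-absorbing-class behaviour, so that \maru{1}$\Rightarrow$\maru{4} holds because the minimal zero automaton is in particular quasi-zero, while \maru{4}$\Rightarrow$\maru{3} follows from the very same factor-occurrence argument used for \maru{2}$\Rightarrow$\maru{3}; this last equivalence is what makes the property testable in linear time without first minimising.
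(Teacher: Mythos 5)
Your overall architecture is sound, and the easy directions coincide with the paper's: \con{za}$\Leftrightarrow$\con{zero} via the constant map onto the sink state (the paper isolates the ``unique trivial strongly connected sink component'' characterisation as Lemma \ref{lemma:synchronisation}), and \con{zero}$\Rightarrow$\con{zeroone} via the exponential decay of factor-avoiding words (Borges's theorem). Where you genuinely diverge is the hard direction \con{zeroone}$\Rightarrow$\con{za}. The paper never touches stationary distributions: it proves that the past $\past(P)$ of any set of states of $\CA_L$ is a finite Boolean combination of quotients $Lw^{-1}$ (Lemma \ref{lemma:finalstates}, the Eilenberg-style step), shows that $\ZO$ is closed under Boolean operations and quotients, and concludes that $\past(P)\in\ZO$ for every $P$; since each strongly connected sink component $P_i$ satisfies $\mu(\past(P_i))\geq|A|^{-|w_i|}>0$, each must in fact have $\mu(\past(P_i))=1$, and disjointness of the pasts forces a single sink component with a single state. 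This needs only the elementary identities $\mu(wL)=|A|^{-|w|}\mu(L)$ and additivity on disjoint unions. Your route --- absorption of the induced Markov chain into the bottom components, irreducibility and stationarity on each, and minimality to rule out all-final or all-non-final multi-state bottom components --- also works, and in fact yields more (it essentially computes the accumulation points of $\mu_n(L)$, in the spirit of the Markov-chain results discussed in Section \ref{conclusion}), but at the cost of importing Perron--Frobenius convergence theory for periodic chains, which the paper deliberately avoids.

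Two points in your plan need tightening. First, in the periodic case your claim that ``$\mu_n(L)$ oscillates and no limit exists'' is not always true: a multi-state mixed bottom component of period $d>1$ can still give a convergent $\mu_n(L)$ (for instance when each cyclic class carries the same $F$-mass), only with limit strictly inside $(0,1)$. A uniform fix is to observe that the Ces\`aro means of $\mu_n(L)$ converge to $\sum_C \alpha_C\,\pi_C(F\cap C)$, where $\alpha_C>0$ is the absorption probability into the bottom component $C$ and $\pi_C$ its full-support stationary distribution; whenever some reachable $C$ mixes final and non-final states (or two singleton sinks of opposite type exist) this value lies strictly in $(0,1)$, which already rules out $\mu_n(L)\to 0$ and $\mu_n(L)\to 1$. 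Second, your treatment of \con{quasi} is necessarily schematic since you had to guess the definition: the paper's quasi-zero automata are those in which the states of the strongly connected sink components are uniformly final or uniformly non-final (no uniqueness of the bottom class is required), and the equivalence \con{za}$\Leftrightarrow$\con{quasi} is proved by passing to the quotient automaton under the Nerode equivalence rather than by your detour \con{za}$\Rightarrow$\con{quasi}$\Rightarrow$\con{zeroone} --- though that detour, with the correct definition and an exponential-absorption argument in place of Borges's theorem, would also close the cycle of implications correctly.
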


We will prove this theorem as a cyclic chain of implications:
 $\con{za} \Rightarrow \con{zero} \Rightarrow \con{zeroone} \Rightarrow
 \con{za}$, and $\con{za} \Leftrightarrow \con{quasi}$ independently.
We should notice that the most difficult part of this proof is the
 implication $\con{zeroone} \Rightarrow \con{za}$,
 while the former part $\con{za} \Rightarrow \con{zero}
 \Rightarrow \con{zeroone} $ is easy.
The key points of the proof of this part are {\it closure properties of
 $\ZO$} and Lemma \ref{lemma:finalstates}, which comes from Eilenberg's
 variety theorem.
The automata characterisation $\con{quasi}$ of Theorem \ref{thm:zero} leads to a linear
 time algorithm for testing whether a given regular language is
 zero-one. In addition, our automata theoretic proof sheds new light on
 the relation between the zero-one law for regular languages and {\it
 logical fragments over finite words}.\\

\sect{Paper outline}
The remainder of this paper is organised as follows.
In Section \ref{preliminaries}, we first give the necessary definitions
and terminology for languages, monoids, and automata.
Lemma \ref{lemma:finalstates} will be introduced in this section.
For the sake of completeness we include the proof of Lemma
\ref{lemma:finalstates}.
Section \ref{synchronisation} provides a detailed exposition of the
notion of zero automata.
Our automata theoretic proof of Theorem \ref{thm:zero} consists of three parts:
 (i) Check certain closure properties of $\ZO$ (Section \ref{bqa}), (ii)
 Apply Lemma \ref{lemma:finalstates} to prove the implication
 $\con{zeroone} \Rightarrow \con{za}$ (Section \ref{zero}).
 (iii) Generalise the notion of zero automata, and prove $\con{za}
 \Leftrightarrow \con{quasi}$ (Section \ref{algorithm}).
In Section \ref{algorithm}, we will give a linear time algorithm (Theorem \ref{thm:algorithm}).
The logical aspects of our results are investigated in Section \ref{logics}.
Finally, we discuss some related works of our results and conclude this
paper in Section \ref{conclusion}.
We try to keep all sections as self-contained as possible.

\section{Preliminaries}\label{preliminaries}
In this paper, all considered automata are {\it deterministic finite,
complete} and {\it accessible}. We refer the reader to the book by
Sakarovitch \cite{Sakarovitch:2009:EAT:1629683} for background material. \\

\sect{Languages and monoids}
We denote by $A^* \; [A^n]$ the set of all words [of length $n$] over a
nonempty finite alphabet $A$, and by $|w|$ the length of a word $w$ in $A^*$.
The empty word is denoted by $\varepsilon$.
That is, $A^*$ is the free monoid over $A$ with the neutral element $\varepsilon$.
We can easily verify that
\[
 \mu_{n+k}(A^k L) = \frac{|A^k L \cap A^{n+k}|}{|A^{n+k}|} =
 \frac{|A^k(L \cap A^{n})|}{|A^k A^n|} = \frac{|L \cap A^{n}|}{|A^{n}|}
 = \mu_n(L)
\]
holds for any language $L$ of $A^*$ and $k \geq 0$.
It follows from what has been said that $\mu(A^k L)$
exists if and only if $\mu(L)$ exists and in that case they are equal
$\mu(A^kL) = \mu(L)$.
If two languages $L$ and $K$ of $A^*$ are mutually
disjoint ($L \cap K = \emptyset$), then clearly $\mu(L \cup K) = \mu(L)
+ \mu(K)$ holds if both $\mu(L)$ and $\mu(K)$ exist.
We say that $v$ is a {\it factor of} $w$ if, there exists $x,y$ in $A^*$
such that $w = xvy$.
Let $L$ be a language of $A^*$ and let $u$ be a word of $A^*$.
The {\it left} [{\it right}] {\it quotient} $u^{-1} L \; [L u^{-1}]$ of $L$ by $u$ is
defined by
\[
 u^{-1} L = \{ v \in A^* \mid uv \in L \} \;\;\;\;\;\; \text{and}
 \;\;\;\;\;\; L u^{-1} = \{ v \in A^* \mid vu \in L\}.
\]
We denote by $\overline{L} = A^* \setminus L$ the {\it complement} of $L$.
The {\it syntactic congruence} of $L$ of $A^*$ is the relation $\sim_L$
defined on $A^*$ by $u \sim_L v$ if and only if, $xuy \in L
\Leftrightarrow xvy \in L$ holds for all $x, y$ in $A^*$.
The quotient $A^*/\sim_L$ is called the {\it syntactic monoid} of $L$ and the
natural morphism $\phi_L: A^* \rightarrow A^*/\sim_L$ is called the {\it
syntactic morphism} of $L$.
If $M$ is a monoid, an element $\bm{0}$ in $M$ is said to be a {\it zero} if,
$\bm{0}m = m\bm{0} = \bm{0}$ holds for all $m$ in $M$.\\

\sect{Automata and an important lemma}
An {\it (complete deterministic finite) automaton} over a finite alphabet $A$ is
a quintuple $\CA = \langle Q, A, \cdot, q_0, F \rangle$ where
\begin{itemize}
 \item $Q$ is a finite set of {\it states};
 \item $\cdot: Q \times A \rightarrow Q$ is a {\it transition function},
 which can be extended to a mapping $\cdot : Q \times A^* \rightarrow Q$
	   by $q \cdot \varepsilon = q$ and $q \cdot aw = (q \cdot a) \cdot
	   w$ where $q \in Q, a \in A$ and $w \in A^*$;
 \item $q_0 \in Q$ is an {\it initial state}, and $F \subseteq Q$ is a
	   set of {\it final states}.
\end{itemize}
The {\it language recognised by} $\CA$ is denoted by $L(\CA) = \{ w \in
A^* \mid q_0 \cdot w \in F\}$.
We say that $\CA$ {\it recognises} $L$ if $L = L(\CA)$.
It is a basic fact that, for any regular language $L$, there
exists a unique automaton recognises $L$ which has the minimum number of
states: the {\it minimal automaton} of $L$ and we denote it by $\CA_L$.
Each word $w$ in $A^*$ defines the transformation $w: q \mapsto q \cdot w$ on $Q$.
The {\it transition monoid} of $\CA$ is equal to the transformation
monoid generated by the generators $A$.
It is well known that the syntactic monoid of a regular language
is equal to the transition monoid of its minimal automaton.

For any subset $P$ of $Q$, the {\it past} of $P$
is the language denoted by $\past(P)$ and defined by
\[
 \past(P) = \{ w \in A^* \mid q_0 \cdot w \in P \}.
\]
Dually, the {\it future} of a subset $P$ of $Q$ is the language denoted
by $\future(P)$ and defined by
\[
 \future(P) = \{ w \in A^* \mid \exists p \in P, p \cdot w \in F \}.
\]
It is well known that, an (accessible) automaton $\CA$ is minimal if and
only if the following condition
\begin{equation}
 p = q \;\;\;\; \Leftrightarrow \;\;\;\; \future(p) = \future(q) \tag{{\bf M}} \label{condition:minimal}
\end{equation}
holds for every pair of states $p, q$ in $Q$.
Myhill-Nerode theorem states that every regular language has only a
finite number of left and right quotients.

In Section \ref{zero}, to prove Theorem \ref{thm:zero}, we will use the
following technical but important lemma. 
For the sake of completeness we include the proof, which is essentially
based on ``Proof of Theorem 3.2 and 3.2s'' in the book \cite{Eilenberg} by Eilenberg.
\begin{lemma}\label{lemma:finalstates}
 Let $\CA_L = \langle Q, A, \cdot, q_0, F \rangle$ be the minimal
 automaton of a language $L$. Then for any subset $P$ of $Q$, its past $\past(P)$
 can be expressed as a finite Boolean combination of languages of the form
 $Lw^{-1}$.
\end{lemma}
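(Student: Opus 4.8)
The plan is to observe that the map $\past$, which sends a subset of $Q$ to its past language, is a morphism of Boolean algebras, and then to identify each right quotient $Lw^{-1}$ as the past of a canonical subset of $Q$; minimality will supply exactly the separation property needed to build every subset of $Q$ from these canonical ones. First I would record that, since $\CA_L$ is complete and deterministic, every word $w$ determines a unique state $q_0 \cdot w \in Q$, so for all subsets $P, P'$ of $Q$ we have $\past(P \cup P') = \past(P) \cup \past(P')$, $\past(P \cap P') = \past(P) \cap \past(P')$ and $\past(Q \setminus P) = \overline{\past(P)}$. Thus $\past$ commutes with finite Boolean operations.

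Next, for a word $w$ I would put $P_w = \{q \in Q \mid q \cdot w \in F\}$ and verify the identity $Lw^{-1} = \past(P_w)$: indeed $v \in Lw^{-1}$ iff $vw \in L$ iff $(q_0 \cdot v)\cdot w \in F$ iff $q_0 \cdot v \in P_w$ iff $v \in \past(P_w)$. Hence every language of the form $Lw^{-1}$ is the past of the set $P_w$, so it suffices to express the target subset $P$ as a Boolean combination of the sets $P_w$ inside the power set of $Q$, and then transport that combination through $\past$.

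The crucial step uses minimality. The family $\{P_w \mid w \in A^*\}$ separates the states of $Q$: if $p \neq q$ then condition~\eqref{condition:minimal} gives $\future(p) \neq \future(q)$, so there is a word $w$ with, say, $p \cdot w \in F$ and $q \cdot w \notin F$, that is $p \in P_w$ and $q \notin P_w$. A separating family of subsets of a finite set generates its whole power set under Boolean operations: the atoms of the generated algebra are the classes of the relation ``lying in exactly the same $P_w$'s'', and separation forces every such class to be a singleton. Consequently each subset $P$ of $Q$ is a finite Boolean combination of the sets $P_w$, finiteness being automatic since $Q$ has only finitely many subsets.

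Finally I would apply $\past$ to such a Boolean expression for $P$. Because $\past$ commutes with union, intersection and complement, $\past(P)$ becomes the corresponding Boolean combination of the languages $\past(P_w) = Lw^{-1}$, which is exactly the desired conclusion. The only genuinely substantive point is the separation step: one must recognise that the minimality condition~\eqref{condition:minimal} is precisely what guarantees that the preimages $P_w$ of $F$ distinguish states, and hence that the right quotients $Lw^{-1}$ are rich enough to reconstruct the past of an arbitrary set of states.
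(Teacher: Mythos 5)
Your proof is correct and is essentially the paper's argument in a more structured form: the atom of your generated Boolean algebra that contains a state $q$ is $\bigcap_{w \in \future(q)} P_w \setminus \bigcup_{w \notin \future(q)} P_w$, which after applying $\past$ is exactly the paper's explicit expression $\past(q) = \bigl( \bigcap_{w \in \future(q)} Lw^{-1} \bigr) \setminus \bigl( \bigcup_{w \notin \future(q)} Lw^{-1} \bigr)$, and minimality enters in the same way (states are separated by their futures). The only presentational difference is that the paper verifies the two inclusions of that identity directly, whereas you derive it from the observations that $\past$ is a morphism of Boolean algebras and that a separating family of subsets of a finite set generates its full power set; your finiteness argument (only finitely many distinct $P_w$ since $Q$ is finite) plays the role of the paper's appeal to Myhill--Nerode for the finiteness of $\{Lw^{-1} \mid w \in A^*\}$.
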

\begin{proof}
 We only have to prove that, for any state $q$ in $Q$, its past
 $\past(q)$ can be expressed as a Boolean combination of languages of
 the form $Lw^{-1}$. 
Our goal is to prove the following equation with the usual conventions $\bigcap_{w \in \emptyset} L w^{-1} = A^*$
 and $\bigcup_{w \in \emptyset} L w^{-1} = \emptyset$:
\begin{eqnarray}
 \past(q) = \left( \bigcap_{w \in \future(q)}    L w^{-1} \right) \setminus
            \left( \bigcup_{w \notin \future(q)} L w^{-1} \right). \label{eq:ba}
\end{eqnarray}
 The finiteness of this Boolean combination follows from Myhill-Nerode
 theorem.

 We prove first that the left hand side is contained in the right hand
 side in Equation \eqref{eq:ba}. 
 Let $v$ be a word in $\past(q)$. If a word $w$ in $\future(q)$, then
 $vw$ in $L$ by the definition, and hence $v$ in $L w^{-1}$.
 If a word $w$ not in $\future(q)$, then $vw$ not in $L$ by the
 definition, and hence $v$ not in $L w^{-1}$.
 It follows that the left hand side is contained in the right hand side
 in Equation \eqref{eq:ba}.

 Then we prove that the right hand side is contained in the left hand
 side in Equation \eqref{eq:ba}.
 Let $v$ be a word in right hand side in Equation \eqref{eq:ba}.
 Let $p$ be the state satisfies $q_0 \cdot v = p$, that is, $v$ is a
 word in $\past(p)$.
 For any $w$ in $\future(q)$, by the form of Equation \eqref{eq:ba}, $v$ is
 in $L w^{-1}$ from which we get $vw$ in $L$ whence $p \cdot w$ in $F$. That
 is, $w$ also belongs to $\future(p)$.
 Conversely, for any $w$ not in $\future(q)$, $vw$ is not in $L$ and
 thus $v$ not in $L w^{-1}$. That is, $w$ does not belong to $\future(p)$.
 It follows that $p$ and $q$ have the same future $\future(p) = \future(q)$
 from which we get $p = q$ by Condition \eqref{condition:minimal} of the
 minimality of $\CA_L$.
 Hence we obtain $v$ in $\past(q)$ and thus the right hand side
 is contained in the left hand side in Equation \eqref{eq:ba}. 
\end{proof}

\begin{remark}\upshape
 A {\it variety of languages} is a class of regular languages closed
under Boolean operations, left and right
 quotients and inverses of morphisms.
 The algebraic counterpart of a variety is a {\it (pseudo)variety of finite
 monoids}: a class of finite monoids closed under taking submonoids,
 quotients and finite direct products (\cf \cite{MPRI}).
 Eilenberg's variety theorem \cite{Eilenberg} states that varieties of
 languages are in one-to-one correspondence  with varieties of
 finite monoids.
 Lemma \ref{lemma:finalstates} shows us an importance of the Boolean operations taken in
 tandem with quotients.
 While this lemma is known (\cf \cite{Esik:2003:ActaCybernetica}),
 which is an ``automaton version'' of a key lemma in Eilenberg's variety
 theorem, we have not found any literature that includes a complete proof.
\end{remark}
\section{Zero automata}\label{synchronisation}
In this seciton, we introduce a {\it zero automaton}, which plays a major
role in our work.
In contrast to the class of monoids with zero, their
natural counterpart, the class of zero automata has not been given much
attention.
To the best of our knowledge, only few studies (\eg \cite{Rystsov1997273})
have investigated zero automata in the context of the theory of
synchronising word for {\it \v{C}ern\'y's conjecture}.

Let $\CA$ be an automaton $\langle Q, A, \cdot, q_0, F
\rangle$.
For each pair of states $p,q$ in $Q$, we say that $q$ {\it is reachable
from} $p$ if, there exists a word $w$ such that $p \cdot w = q$. $\CA$
is called {\it accessible} if every state $q$ in $Q$ is reachable from
the initial state $q_0$.
A subset $P$ of $Q$ is called {\it strongly connected component}, if
for each state $q$ in $P$, $q$ is reachable from every other state in $P$.
A state $q$ in $Q$ is said to be {\it sink}, if $q \cdot a = q$ holds for
every letter $a$ in $A$.
We say that a subset $P$ of $Q$ is {\it sink}, analogously, if
there is no transition from any state $p$ in $P$ to a state which
does not in $P$. That is, $Q \setminus P$ are not reachable from $P$.
Note that, every (complete) automaton has at least one strongly
connected sink component. The family of all strongly connected
sink components of $\CA$ is denoted by $\sink(\CA)$.
A strongly connected component $P$ is {\it trivial} if it consists
of some single state $P = \{p\}$. 
We shall identify a singleton $\{p\}$ with its unique element $p$.
A word $w$ is a {\it synchronising word of} $\CA$ if, there exists a
certain state $q$ in $Q$, $p \cdot w = q$ holds for every state $p$ in
$Q$. That is, $w$ is the {\it constant map} from $Q$ to $q$.
We call an automaton {\it synchronising} if it has a synchronising word.
Note that any synchronising automaton has at most one sink state.
As we will prove in Section \ref{zero}, the following class of automata
captures precisely the zero-one law for regular languages.
\begin{definition}[\cite{Rystsov1997273}]\label{def:sink}\upshape
A {\it zero automaton} is a synchronising automaton with a sink state.
\end{definition}

\begin{example}\label{ex:zeroautomata}\upshape
\begin{figure}[t]
\centering\includegraphics[width=.8\columnwidth]{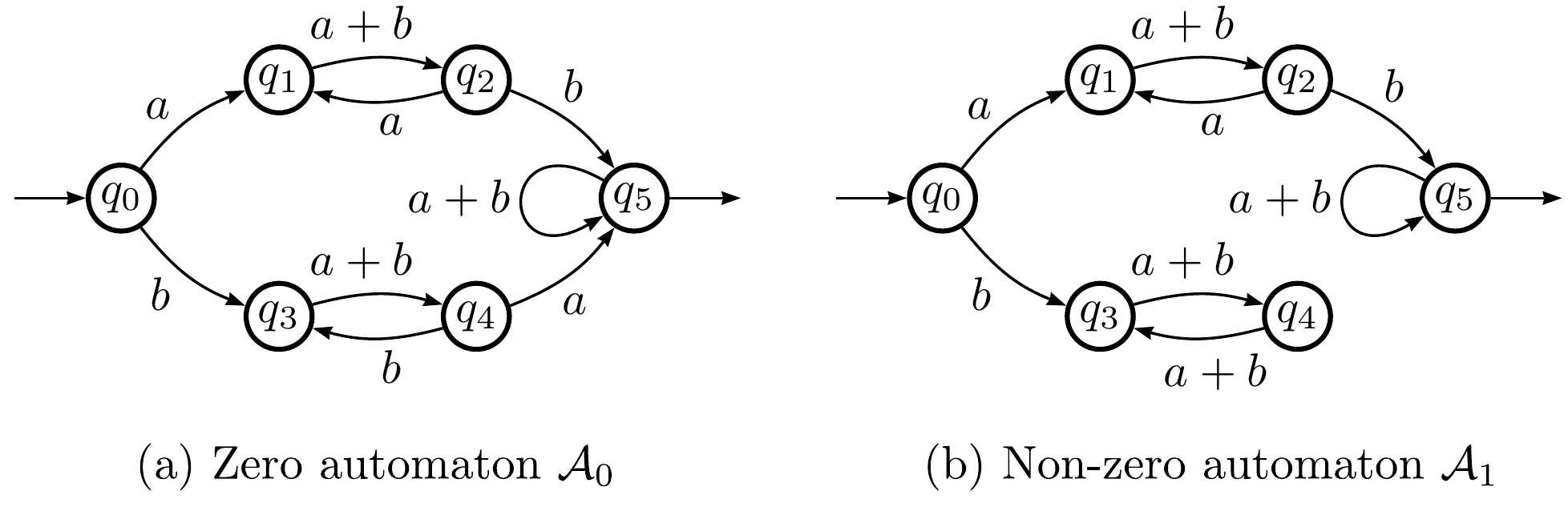}
\caption{Zero and non-zero automata}
\label{fig:zero-one}
\end{figure}
Consider two automata $\CA_0$ and $\CA_1$ illustrated in Figure
 \ref{fig:zero-one}.
$\CA_0$ is a zero automaton but $\CA_1$ is not, though both automata have a
 sink state $q_5$. The only difference between $\CA_0$ and $\CA_1$ is
 the transition result of $q_4 \cdot a$; which equals to $q_5$ in $\CA_0$,
 while which equals to $q_3$ in $\CA_1$.
We can easily verify that, $\CA_0$ has a unique strongly
 connected sink component $q_5$, while $\CA_1$ has two 
 strongly connected sink components $\{q_3, q_4\}$ and $q_5$.
\end{example}
Definition \ref{def:sink} can be rephrased as follows.
\begin{lemma}\label{lemma:synchronisation}
Let $\CA = \langle Q, A, \cdot, q_0, F \rangle$ be an automaton.
 Then $\CA$ is zero if and only if $\CA$ has a unique strongly connected
 sink component and it is trivial, \ie $\sink(\CA) = \{\{p\}\}$ for
 a certain sink state $p$.
\end{lemma}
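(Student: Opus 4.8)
The plan is to prove the two implications of the biconditional separately, reducing each to the single elementary fact that a sink state $p$ absorbs every word, i.e.\ $p \cdot u = p$ for all $u \in A^*$.

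For the forward direction I assume $\CA$ is zero, so it carries a synchronising word $w$ collapsing $Q$ onto some state $q$, and it has a sink state $s$. The first observation is that $w$ must collapse onto $s$ itself: since $s$ is a sink, $s \cdot w = s$, whereas synchronisation gives $s \cdot w = q$, forcing $q = s$. I then show $\sink(\CA) = \{\{s\}\}$. The singleton $\{s\}$ is a strongly connected sink component, being trivially strongly connected and, as $s$ is a sink, admitting no transition out of it. For uniqueness, let $P$ be any strongly connected sink component and pick $r \in P$; applying $w$ gives $r \cdot w = s$, and since $P$ is a sink this image stays in $P$, so $s \in P$. Strong connectivity of $P$ then forces every state of $P$ to be reachable from $s$, but $s$ absorbs all words, whence $P = \{s\}$.

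For the backward direction I assume $\sink(\CA) = \{\{p\}\}$; then $p$ is already a sink state, so it remains only to exhibit a synchronising word. The preliminary fact I need is that every state $q$ reaches $p$. To see this, restrict $\CA$ to the states reachable from $q$: this set is closed under transitions, so it is again a complete automaton and therefore contains a strongly connected sink component (which exists for any complete automaton), and such a component is also a strongly connected sink component of $\CA$ reachable from $q$; by hypothesis it must be $\{p\}$, so $p$ is reachable from $q$. Granting this, I build a synchronising word by induction along an enumeration $q_1, \dots, q_n$ of $Q$: I maintain a word $w_k$ with $q_i \cdot w_k = p$ for every $i \le k$, and extend it by any word $u$ with $(q_{k+1} \cdot w_k) \cdot u = p$; since $p$ absorbs $u$, the states $q_1, \dots, q_k$ remain at $p$ while $q_{k+1}$ is now driven to $p$ as well. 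After $n$ steps the word $w_n$ collapses all of $Q$ onto $p$, so $\CA$ is synchronising and hence zero.

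The main obstacle is this backward direction: converting the purely structural hypothesis ``unique, trivial strongly connected sink component'' into an actual synchronising word. The inductive construction is the crux, and it hinges precisely on the absorption property of the sink state, which is exactly what lets each extension preserve all earlier collapses. The routine fact that a strongly connected sink subset is automatically a maximal terminal component — any strongly connected set lies inside a single maximal component, and the sink condition then forbids it from being a proper subset — can be recorded beforehand as a short remark so that ``strongly connected sink component'' is used unambiguously throughout.
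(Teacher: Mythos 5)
Your proof is correct and follows essentially the same route as the paper: the forward direction observes that the synchronising word must collapse onto the sink state and that this rules out any other strongly connected sink component, and the backward direction first shows every state reaches $p$ (via the fact that the set of states reachable from $q$ must contain a strongly connected sink component) and then builds a synchronising word by the same inductive concatenation exploiting that $p$ absorbs all words. Your write-up is in fact slightly more explicit than the paper's in the uniqueness step of the forward direction, but there is no substantive difference in approach.
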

\begin{proof}
 First we assume $\CA$ is zero with a sink state $p$. Then there exists a
 synchronising word $w$ and it clearly satisfies $q \cdot w = p$ for each
 $q$ in $Q$ since $p$ is sink. This shows that there is no strongly
 connected sink component in $Q \setminus p$.

 Now we prove the converse direction, we assume $\CA$ has a unique
 strongly connected sink component and it is trivial, say $p$.
 We can verify that for every state $q$ in $Q$, there exists a word $w$
 in $A^*$, such that $q \cdot w = p$.
 Indeed, if there does not exist such word $w$ for some $q$, then the
 set of all reachable states from $q: \{r \in Q \mid \exists w \in
 A^*, q \cdot w = r \}$ must contains at least one strongly connected
 sink component which does not contain $p$. This contradicts with the
 uniqueness of the closed strongly connected component $p$ in $\CA$.
 The existence of a synchronising word $w$ is guaranteed, because we
 can concretely construct it as follows.
 Let $n$ be the number of states $n = |Q|$ and let $Q = \{q_0, \cdots,
 q_{n-1} = p\}$. We define a word sequence $w_i$ inductively by
 $w_0 = u_{q_0}$ and $w_{i} = u_{(q_i \cdot v_{i-1})}$ where each $u_{q_i}$ is
 a shortest word satisfies $q_i \cdot u_{q_i} = p$, and $v_{i-1}$ is the
 word of the form $w_0 \cdots w_{i-1}$.
\begin{figure}[t]
\centering\includegraphics[width=0.6\columnwidth]{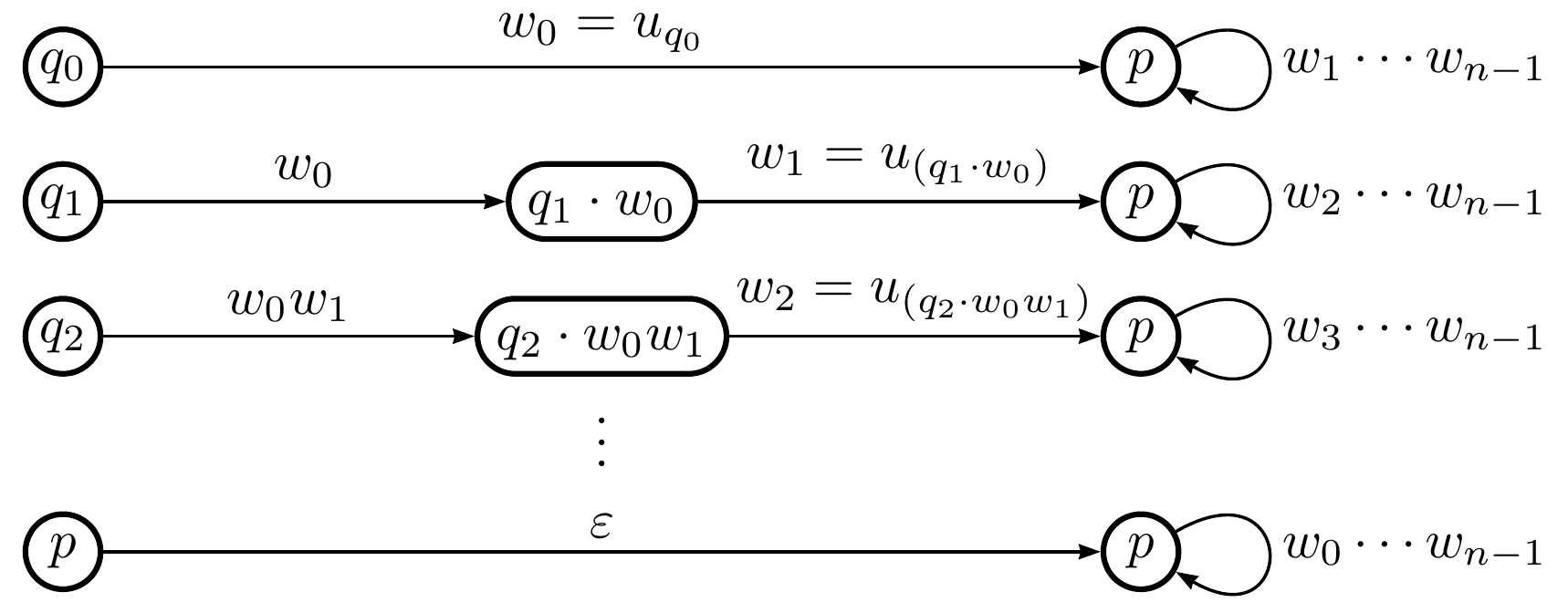}
\caption{Synchronising word $v_{n-1} = w_0 \cdots w_{n-1}$ in the
 proof of Lemma \ref{lemma:synchronisation}}
 \label{fig:synchronisation}
\end{figure}
 As shown in Figure \ref{fig:synchronisation}, we can easily verify that
 the word $v_{n-1} = w_0 \cdots w_{n-1}$ is a synchronising word
 satisfies $q \cdot v_{n-1} = p$ for each $q$ in $Q$. 

 For example, consider the zero automaton $\CA_0$ in Figure
 \ref{fig:zero-one}. Then each $u_{q_i}, w_{q_i}$ and $v_{q_i}$ are defined as follows.
\begin{center}
\begin{tabular}{cccc}
 & \;\;\; $u_{q_i}$ \vspace{.5mm} \;\;\; & \;\;\; $w_{q_i}$ \;\;\; & \;\;\; $v_{q_i}$ \;\;\; \\\hline
\;\;\; $q_0$ & $aab$ & $aab$ & $aab$ \\
\;\;\; $q_1$ & $ab$ & $b$ & $aabb$ \\
\;\;\; $q_2$ & $b$ & $\varepsilon$ & $aabb$ \\
\;\;\; $q_3$ & $aa$ & $\varepsilon$ & $aabb$ \\
\;\;\; $q_4$ & $a$ & $\varepsilon$ & $aabb$ \\
\;\;\; $q_5$ & $\varepsilon$ & $\varepsilon$ & $aabb$ \\
\end{tabular}
\end{center}
 The obtained word $v_{q_4} = aabb$ is a synchronising word which
 satisfies $q_i \cdot aabb = q_5$ for all $q_i$ in $\CA_0$.
It is clear that the non-zero automaton $\CA_1$ in Figure \ref{fig:zero-one} does not have
 a synchronising word since it has two strongly connected sink components.
\end{proof}

\section{Closure properties of $\ZO$}\label{bqa}
We first introduce the following lemma.
\begin{lemma}\label{lemma:probability}
Let $L$ be a language of $A^*$ and $w$ be a word in $A^k$.
 Then the asymptotic probability of $L$ exists if and only if the asymptotic probability of the language $wL \;
 [Lw]$ exists. Moreover, these limits satisfies the equation $\mu(wL) = \mu(Lw) = |A|^{-k}\mu(L)$.
\end{lemma}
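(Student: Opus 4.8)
The plan is to reduce this statement about asymptotic probabilities to an elementary identity between counting functions, in exactly the spirit of the computation of $\mu_{n+k}(A^k L)$ already carried out in Section~\ref{preliminaries}. The essential observation is that prefixing by a fixed word $w \in A^k$ is injective on $A^*$, so it induces a length-shifting bijection between $L \cap A^{n-k}$ and $(wL) \cap A^n$; suffixing by $w$ does the same for $Lw$. This immediately links the counting function of $wL$ (respectively $Lw$) to a shifted copy of that of $L$, and the probability statement then falls out by dividing by $|A|^n$ and taking limits.

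First I would establish, for every $n \geq k$, the counting identity
\[
 \gamma_n(wL) = \gamma_n(Lw) = \gamma_{n-k}(L).
\]
For the prefix case, every word of length $n$ in $wL$ is of the form $wv$ with $v \in L \cap A^{n-k}$, and since $v \mapsto wv$ is injective, $(wL) \cap A^n$ is in bijection with $L \cap A^{n-k}$; the suffix case is symmetric via $v \mapsto vw$. (For $n < k$ both sides are $0$, which is irrelevant for the limit.) Dividing by $|A|^n$ and factoring out $|A|^{-k}$ then gives, for $n \geq k$,
\[
 \mu_n(wL) = \mu_n(Lw) = \frac{\gamma_{n-k}(L)}{|A|^n}
           = |A|^{-k}\,\frac{\gamma_{n-k}(L)}{|A|^{n-k}} = |A|^{-k}\,\mu_{n-k}(L).
\]

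Finally I would pass to the limit. Since the sequence $(\mu_n(wL))_n$ is, up to the constant positive factor $|A|^{-k}$, merely the index-shifted sequence $(\mu_{n-k}(L))_n$, and shifting the index by the fixed constant $k$ neither creates nor destroys convergence, the limit $\limn \mu_n(wL)$ exists if and only if $\limn \mu_n(L) = \mu(L)$ exists, and in that case $\mu(wL) = |A|^{-k}\mu(L)$; the argument for $Lw$ is identical. I expect no genuine obstacle here: the only point deserving care is the injectivity of $v \mapsto wv$ and $v \mapsto vw$, which is what forces the factor $|A|^{-k}$ and thereby distinguishes this lemma from the $\mu(A^k L) = \mu(L)$ identity of the preliminaries, where one sums over all $|A|^k$ prefixes of length $k$ rather than fixing a single word $w$.
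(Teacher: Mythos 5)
Your proof is correct, and it rests on the same key fact as the paper's: prefixing (or suffixing) by a fixed word $w \in A^k$ is a bijection from $L \cap A^{n-k}$ onto $(wL)\cap A^n$, so $\gamma_n(wL)=\gamma_n(Lw)=\gamma_{n-k}(L)$ for $n\ge k$. Where you diverge is in how you finish. You divide by $|A|^n$ to get $\mu_n(wL)=|A|^{-k}\mu_{n-k}(L)$ and conclude by observing that an index shift by the constant $k$ neither creates nor destroys convergence; this makes the ``exists if and only if exists'' part of the statement completely transparent in both directions. The paper instead decomposes $A^kL$ as the disjoint union $\bigcup_{u\in A^k} uL$ of $|A|^k$ languages that all share the counting function of $wL$, and then invokes the identity $\mu(A^kL)=\mu(L)$ already established in the preliminaries to get $\mu(L)=\sum_{u\in A^k}\mu(uL)=|A|^k\mu(wL)$. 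The paper's route re-uses earlier machinery and emphasises the disjoint-union structure (which reappears later in the proof of Theorem \ref{thm:zero}), but as written it leaves the converse existence direction somewhat implicit; your direct index-shift computation is slightly more elementary and handles the biconditional more cleanly. Both are sound.
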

\begin{proof}
 Since $wL$ and $Lw$ clearly have the same counting function, we only have to
 prove the case of $wL$. 
For every $u, v$ in $A^k$ such that $u \neq v$, the
 language $uL$ and $vL$ are obviously mutually disjoint and these counting
 functions satisfies
\[
 \gamma_n(uL) = \gamma_n(vL) = \begin{cases}
								0 &  n < k,\\
								\gamma_{n-k}(L) & n \geq k.
							   \end{cases}
\]
This shows that $uL$ and $vL$ have the same counting function and thus
 have the same asymptotic probability if its exists.
We can easily verify that 
\[
 \mu(L) = \mu\left( A^k L \right) = \sum_{u \in A^k}
 \mu(uL) = |A|^k \mu(wL)
\]
holds for any $w$ in $A^k$.
\end{proof}

Now we prove the following proposition, which states the necessary closure
properties of the class $\ZO$ for Lemma \ref{lemma:finalstates}.
\begin{proposition}\label{prop:bqa}
 $\ZO$ is closed under Boolean operations, left and right quotients.
\end{proposition}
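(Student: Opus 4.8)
The plan is to separate the Boolean closure properties from the quotient closure, since the former follow almost immediately from the monotonicity and additivity of $\mu_n$, whereas the latter is the genuinely substantial case. Throughout I rely on the elementary identities $\mu_n(L) + \mu_n(\overline{L}) = 1$ (because $L$ and $\overline{L}$ partition $A^*$) and $\mu_n(L \cup K) = \mu_n(L) + \mu_n(K)$ for disjoint $L,K$, together with Lemma \ref{lemma:probability}.

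First I would dispose of complement: if $\mu(L) \in \{0,1\}$ then $\mu(\overline{L}) = \limn (1 - \mu_n(L)) = 1 - \mu(L) \in \{1,0\}$, so $\overline{L} \in \ZO$. For union, take $L, K \in \ZO$ and use the pointwise bounds
\[
\max\{\mu_n(L), \mu_n(K)\} \;\le\; \mu_n(L \cup K) \;\le\; \mu_n(L) + \mu_n(K),
\]
which follow from $L, K \subseteq L \cup K$ and subadditivity of $\gamma_n$. If $\mu(L) = 1$ or $\mu(K) = 1$, the left bound together with $\mu_n \le 1$ squeezes $\mu_n(L \cup K)$ to $1$; if $\mu(L) = \mu(K) = 0$, the right bound squeezes it to $0$. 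Either way $L \cup K \in \ZO$. Intersection then comes for free by De Morgan, $L \cap K = \overline{\overline{L} \cup \overline{K}}$, using the two cases just established.

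The main work is closure under quotients. Since $(a_1\cdots a_k)^{-1}L = a_k^{-1}(\cdots(a_1^{-1}L))$ and symmetrically for right quotients, it suffices by induction on $|u|$ to prove that $a^{-1}L \in \ZO$ (and $La^{-1} \in \ZO$) for a single letter $a$. The key ingredient is the disjoint decomposition of the nonempty words of $L$ according to their first letter,
\[
L \setminus \{\varepsilon\} = \bigcup_{a \in A} a\,(a^{-1}L),
\]
which is a disjoint union. Passing to counting functions gives $\gamma_n(a(a^{-1}L)) = \gamma_{n-1}(a^{-1}L)$ for $n \ge 1$ (the identity underlying Lemma \ref{lemma:probability}), so that $\mu_n(a(a^{-1}L)) = |A|^{-1}\mu_{n-1}(a^{-1}L)$, and summing over the first letter yields, for every $n \ge 1$,
\[
\sum_{a \in A} \mu_{n-1}(a^{-1}L) = |A|\,\mu_n(L).
\]
Now I would case-split on $\mu(L)$. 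If $\mu(L) = 0$ the right-hand side tends to $0$ and each nonnegative summand is forced to $0$, so $\mu(a^{-1}L) = 0$. If $\mu(L) = 1$ the right-hand side tends to $|A|$, while each of the $|A|$ summands lies in $[0,1]$; fixing $b$ and writing $\mu_{n-1}(b^{-1}L) \ge \sum_{a \in A} \mu_{n-1}(a^{-1}L) - (|A|-1)$ forces every summand to $1$, so $\mu(a^{-1}L) = 1$. In both cases $a^{-1}L \in \ZO$, and the right-quotient case is identical using the decomposition by last letter.

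The step I expect to be the main obstacle is the $\mu(L) = 1$ branch of the quotient argument: unlike the Boolean cases there is no containment between $a^{-1}L$ and $L$, so one cannot bound $\mu_n(a^{-1}L)$ monotonically, and the mere existence of $\mu(a^{-1}L)$ is not a priori clear. The decisive observation is that a sum of $|A|$ quantities, each in $[0,1]$, whose total converges to the maximal value $|A|$ must have every term converging to $1$; this simultaneously establishes existence of the limit and pins it to $1$. Once Lemma \ref{lemma:probability} and this limiting remark are in hand, everything else is routine.
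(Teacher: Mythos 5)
Your proof is correct and follows essentially the same route as the paper: the same disjoint decomposition $L\setminus\{\varepsilon\}=\bigcup_{a\in A}a(a^{-1}L)$, the same use of Lemma \ref{lemma:probability}, and the same "nonnegative summands with vanishing sum" argument for the $\mu(L)=0$ case. The only divergence is the $\mu(L)=1$ branch: you handle it directly by noting that $|A|$ terms in $[0,1]$ whose sum tends to $|A|$ must each tend to $1$, whereas the paper reduces it to the $\mu=0$ case via the identity $a^{-1}\overline{L}=\overline{a^{-1}L}$ and the already-established closure under complement; both are valid, and yours has the minor merit of settling existence and value of $\mu(a^{-1}L)$ in a single squeeze without invoking the complement.
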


\begin{proof}[Proposition \ref{prop:bqa}] We first prove that $\ZO$ is closed under Boolean
 operations, and then prove that $\ZO$ is closed under quotients.\\

\sect{$\ZO$ is closed under Boolean operations}
Let $L, K$ be two languages in $\ZO$.
It is obvious that $\ZO$ is closed under complement since
 $\mu(\overline{L}) = 1 - \mu(L) \in \{0, 1\}$, and we can easily verify
 that the following equations holds.
\begin{itemize}
 \item $\mu(L \cup K) = 0$ if $\mu(L) = 0$ and $\mu(K) = 0$;
 \item $\mu(L \cap K) = 0$ if either $\mu(L) = 0$ or $\mu(K) = 0$;
 \item $\mu(L \cup K) = 1$ if either $\mu(L) = 1$ or $\mu(K) = 1$;
 \item $\mu(L \cap K) = 1$ if $\mu(L) = 1$ and $\mu(K) = 1$.
\end{itemize}\vspace{5mm}

\sect{$\ZO$ is closed under quotients}
We first prove that $\ZO$ is closed under left quotients.
Let $L$ be a regular language in $\ZO$ and we assume that $L$ does
 not contain $\varepsilon$ without loss of generality.
First we assume $\mu(L) = 0$.
By the definition of left quotients, one can easily verify that
\[
 L = \bigcup_{a \in A} L \cap aA^* = \bigcup_{a \in A} a a^{-1} L
\]
holds (since $\varepsilon \notin L$) and all these sets $aa^{-1}L$ $(= L
 \cap aA^*)$ are mutually disjoint. It follows that the following equation holds.
\begin{eqnarray*}
 \mu(L) &=& \limn \frac{|L \cap A^n|}{|A^n|} = \limn
  \frac{|\left(\bigcup_{a \in A} a a^{-1} L\right) \cap A^n|}{|A^n|} 
 = \limn \frac{|\bigcup_{a \in A} (a a^{-1} L \cap A^n)|}{|A^n|}
 \nonumber\\
 &=& \limn \sum_{a \in A}\frac{|a a^{-1} L \cap A^n|}{|A^n|} = \sum_{a \in A} \mu(a a^{-1} L) = 0.
\end{eqnarray*}
That is, the asymptotic probability $\mu(a a^{-1} L)$
 equals to zero for each $a$ in $A$, since these summation converges to zero.
In addition, $\mu(aa^{-1}L)$ coincides with $\mu(a^{-1} L)$ for any $a$
 in $A$, because $\mu(aa^{-1}L)  = |A|^{-1}\mu(a^{-1}L) = 0$ by Lemma
 \ref{lemma:probability} whence $\mu(a^{-1} L) = 0$.

Next we assume $\mu(L) = 1$. Then $\mu(\overline{L}) = 0$ and 
\begin{eqnarray*}
 a^{-1} \overline{L} = \{ w \in A^* \mid aw \in \overline{L} \} = \{ w
  \in A^* \mid aw \notin L \} = \overline{a^{-1} L}
\end{eqnarray*}
holds. We therefore obtain:
\[
 \mu(a^{-1}L) = 1 - \mu(\overline{a^{-1}L}) = 1 -
 \mu(a^{-1}\overline{L}) = 1 - 0 = 1.
\]
We can prove that $\ZO$ is closed under right quotients by the same
 manner. 
\end{proof}
\section{Equivalence of $\ZO$ and $\Z$}\label{zero}
We will use the following lemma, which is a direct consequence of Lemma
\ref{lemma:finalstates} and Proposition \ref{prop:bqa}.
\begin{lemma}\label{lemma:everywarezeroone}
 Let $L$ be a regular language in $\ZO$, $\CA_L = \langle Q, A, \cdot,
 q_0, F \rangle$ be its minimal automaton.
 Then, for any subset $P$ of $Q$ in $\CA_L$, its past $\past(P)$ is also
 in $\ZO$.
\end{lemma}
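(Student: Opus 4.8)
The plan is to combine Lemma \ref{lemma:finalstates} with Proposition \ref{prop:bqa} in a direct way. Lemma \ref{lemma:finalstates} tells us that for any subset $P$ of $Q$, the past $\past(P)$ can be written as a finite Boolean combination of right quotients of the form $Lw^{-1}$. So the structure of the argument is: first observe that each building block $Lw^{-1}$ lies in $\ZO$, and then observe that $\ZO$ is closed under the operations used to assemble $\past(P)$ from those blocks.

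For the first step, I would argue that since $L$ is in $\ZO$ by hypothesis, each right quotient $Lw^{-1}$ is also in $\ZO$. This is exactly the content of the ``$\ZO$ is closed under right quotients'' part of Proposition \ref{prop:bqa}: starting from $L \in \ZO$, we peel off the letters of $w$ one at a time (or directly invoke closure under right quotients for the word $w$), and each stage stays in $\ZO$. Thus every language $Lw^{-1}$ appearing in the Boolean combination of Equation \eqref{eq:ba} belongs to $\ZO$.

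For the second step, I would invoke the ``$\ZO$ is closed under Boolean operations'' part of Proposition \ref{prop:bqa}. The expression for $\past(P)$ given by Lemma \ref{lemma:finalstates} is a \emph{finite} Boolean combination (the finiteness coming from the Myhill--Nerode theorem, as noted in that lemma's proof) of the languages $Lw^{-1}$; concretely it is an intersection of some quotients set-minus a union of others. Since each $Lw^{-1}$ is in $\ZO$ and $\ZO$ is closed under complement, intersection and union, the whole finite combination remains in $\ZO$. Therefore $\past(P) \in \ZO$, as desired.

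There is essentially no hard obstacle here, since this lemma is explicitly advertised in the excerpt as a ``direct consequence'' of the two earlier results; the only point requiring a little care is the bookkeeping. One must make sure the Boolean combination is genuinely finite so that finite closure suffices (this is guaranteed by Myhill--Nerode), and one should phrase the argument for a general subset $P$ rather than a single state $q$ --- but Lemma \ref{lemma:finalstates} is already stated for arbitrary $P$, so this causes no difficulty. I would keep the proof to a few sentences, simply citing the two results and the finiteness remark.
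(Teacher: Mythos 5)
Your proposal is correct and follows exactly the paper's own argument: express $\past(P)$ as a finite Boolean combination of right quotients $Lw^{-1}$ via Lemma \ref{lemma:finalstates}, then conclude via the closure of $\ZO$ under quotients and Boolean operations from Proposition \ref{prop:bqa}. The paper states this in two sentences; your version merely spells out the same steps in slightly more detail.
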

\begin{proof}
 By Lemma \ref{lemma:finalstates}, for any subset $P$ of $Q$, its past
 $\past(P)$ can be expressed as a finite Boolean combination of languages of
 the form $Lw^{-1}$.
 It follows that $\past(P)$ obeys the zero-one law, since $L$ is in $\ZO$ and $\ZO$
 is closed under Boolean operations and quotients by Proposition \ref{prop:bqa}.
\end{proof}
Lemma \ref{lemma:everywarezeroone} will be used for proving the
direction $\con{zeroone} \Rightarrow \con{za}$. Now we give a proof.

\begin{proof}[Proof of Theorem \ref{thm:zero}]
We show the implication $\con{za} \Rightarrow
 \con{zero} \Rightarrow \con{zeroone} \Rightarrow
 \con{za}$.
The former implication $\con{za} \Rightarrow \con{zero}
 \Rightarrow \con{zeroone}$ is easy and almost folklore, but we include
 a proof here to be self-contained.\\

\sect{$\con{za} \Rightarrow \con{zero}$ ($\CA_L$ is zero $\Rightarrow L$
 is with zero)}
Let $\CA_L = \langle Q, A, \cdot, q_0, F \rangle$ be the minimal
 automaton of $L$ and it is zero with a sink state $p$.
Let $M$ be the transition monoid of $\CA_L$ and $\phi: A^* \rightarrow
 M$ be the syntactic morphism of $L$.
 Then we can verify that $M$ has a zero element $\bm{0}$ as the
 transformation $\bm{0}: q \mapsto p$ for all $q$ in $Q$,
 that is, $\bm{0}$ is the constant map from $Q$ to $p$.
 The existence of $\bm{0}$ is guaranteed since $\CA_L$ is synchronising.
 Indeed, for any synchronising word $w$, $\phi(w) = \bm{0}$ holds.
 One can easily verify that $m\bm{0} = \bm{0}m = \bm{0}$ for all $m$ in $M$.
 This proves that $M$ the syntactic monoid of $L$ has the zero.\\

\sect{$\con{zero} \Rightarrow
 \con{zeroone}$ ($L$ is with zero $\Rightarrow L$ obeys the zero-one law)}
Let $L$ be a regular language in $\Z$, $M$ be its syntactic monoid with
 a zero element $\bm{0}$ and $\phi: A^* \rightarrow M$ be its syntactic
 morphism.
We choose a word $w_{\bm{0}}$ from the preimage of $\bm{0}$: $w_{\bm{0}} \in
 \phi^{-1}(\bm{0})$.

Now we prove $\mu(L) = 1$ if $w_{\bm{0}}$ in $L$.
By the definition of zero, we have
\[
 \phi(x w_{\bm{0}} y) = \phi(x)\phi(w_{\bm{0}})\phi(y) = \phi(x)\bm{0}\phi(y) = \bm{0} 
\]
 for any words $x,y$ in $A^*$. That is, if $w$ contains $w_{\bm 0}$ as
 a factor, then $\phi(w) = \phi(w_{\bm{0}}) = \bm{0}$ holds and hence
 $w$ also in $L$.
Let $L_{w_{\bm{0}}} = A^* w_{\bm{0}} A^*$ be the set of all words that contain $w_{\bm{0}}$
 as a factor.
Then clearly $L_{w_{\bm{0}}}$ is contained in $L$ from
 which we get $\mu_n(L_{w_{\bm{0}}}) \leq \mu_n(L)$ for
 all $n$.
The probability $\mu_n(L_{w_{\bm{0}}})$ is nothing but the
 probability that a randomly chosen word of length $n$ contains
 $w_{\bm{0}}$ as a factor.
 The following well known elementally fact, sometimes called {\it
 Borges's theorem} (\cf Note I.35 in \cite{Flajolet:2009:AC:1506267}),
 ensures that $\mu_n(L_{w_{\bm{0}}})$ tends to one if $n$ tends to
 infinity. This shows $\mu(L) = \mu(L_{w_{\bm{0}}}) = 1$ and we can
 prove $\mu(L) = 0$ if $w_{\bm 0}$ not in $L$ by the same manner.

 \vspace{2mm}\noindent{\bf Borges's theorem.}
 {\it   Take any fixed finite set $\Pi$ of words in $A^*$. A random word
 in $A^*$ of length $n$ contains all the words of the set $\Pi$ as
 factors with probability tending to one exponentially fast as $n$
 tends to infinity. }\vspace{5mm}

\sect{$\con{zeroone} \Rightarrow
 \con{za}$ ($L$ obeys the zero-one law $\Rightarrow \CA_L$ is zero)}
Let $L$ be a regular language in $\ZO$ and $\CA_L = \langle Q, A, \cdot,
 q_0, F \rangle$ be its minimal automaton, let $\sink(\CA_L) = \{P_1, \cdots, P_k
 \}$ for some $k \geq 0$.
 Our goal is to prove $k = 1$ and $\sink(\CA_L) = \{\{p\}\}$ for a
 certain sink state $p$. It follows that $\CA_L$ is zero by Lemma
 \ref{lemma:synchronisation}.

 For any strongly connected sink component $P_i$,
 there exists a word $w_i$ such that $q_0 \cdot w_i$ in $P_i$
 because $\CA_L$ is accessible. Since $P_i$ is sink, the language
 $w_i A^*$ is contained in $\past(P_i)$ from which we get
\begin{eqnarray}
 0 < \mu(w_i A^*) = |A|^{-|w_i|} \mu(A^*) = |A|^{-|w_i|} \leq
  \mu(\past(P_i)) \label{eq:past_}
\end{eqnarray}
 for each $P_i$ by Lemma \ref{lemma:probability}.
 Lemma \ref{lemma:everywarezeroone} and Equation \eqref{eq:past_} implies that
 the asymptotic probability $\mu(\past(P_i))$ surely exists and satisfies
 \begin{eqnarray}
  \mu(\past(P_i)) = 1 \label{eq:past}
 \end{eqnarray}
 for every strongly connected sink component $P_i$.

 Now we prove $k = 1$.
 By Equation \eqref{eq:past}, we can easily verify that
 \begin{eqnarray*}
  \mu\left( \bigcup_{i = 1}^k \past(P_i) \right) = \sum_{i = 1}^k
  \mu(\past(P_i)) = k
 \end{eqnarray*}
 holds because $\CA_L$ is deterministic and thus all $\past(P_i)$ are
 mutually disjoint.
 This clearly shows $k = 1$, that is, there exists a unique strongly
 connected sink component, say $P$, in $\CA_L$: $\sink(\CA_L) = \{P\}.$

 Next we let $P = \{p_1, \cdots, p_n\}$ and prove $n = 1$. 
 Since $P$ satisfies $\mu(\past(P)) = 1$ by Equation \eqref{eq:past},
 there exists exactly one state $p$ in $P$ satisfies $\mu(\past(p)) =
 1$ by Lemma \ref{lemma:everywarezeroone}.
 Further, because $P$ is strongly connected, for every state $p_i$ in $P$, there
 exists a word $w_i$ such that $p \cdot w_i = p_i$.
 It follows that $\past(p) w_i \subseteq \past(p_i)$ and thus
\begin{eqnarray}
 0 < \mu(\past(p) w_i) = |A|^{-|w_i|} \mu(\past(p)) = |A|^{-|w_i|} \leq
  \mu(\past(p_i)) = 1 \label{eq:pastp}
\end{eqnarray}
 holds for every state $p_i$ in $P$ by Lemma \ref{lemma:probability} and
 Lemma \ref{lemma:everywarezeroone}.
 Equation \eqref{eq:past} and \eqref{eq:pastp} implies
 \begin{eqnarray*}
 \mu(\past(P)) = \sum_{i = 1}^n \mu(\past(p_i)) = \sum_{i = 1}^n 1 = n = 1,
 \end{eqnarray*}
 because $\CA_L$ is deterministic and thus all $\past(p_i)$ are
 mutually disjoint.
 We now obtain $n = 1$, that is, $P$ is singleton and hence
 $\sink(\CA_L) = \{p\}$. That is, $\CA_L$ is zero.
\end{proof}

\begin{remark}\upshape
 It is interesting that, though we use Borges's theorem to prove the
 direction $\con{zero} \Rightarrow \con{zeroone}$,
 Theorem \ref{thm:zero} is a vast generalisation of Borges's theorem,
 since any language of the form $A^* K A^*$ where $K$ is regular is
 always recognised by a zero automaton (but the converse is not true). 
 To state Theorem \ref{thm:zero} more precisely, by the proof above we
 can easily verify that, a zero-one language $L$ satisfies $\mu(L) = 1
 \; [\mu(L) = 0]$ if and only if its minimal automaton $\CA_L$ is zero
 and the sink state of $\CA_L$ is final [non-final].
\end{remark}
\section{Linear time algorithm for testing the zero-one law}\label{algorithm}
The equivalence of zero-automata and the zero-one law gives us an
effective algorithm.
For a given $n$-states automaton $\CA$, we can determine whether
$L(\CA)$ obeys the zero-one law by the following steps: (i) Minimise $\CA$
to obtain its minimal automaton $\CB$. (ii) Calculate the family of all
strongly connected components $P$ of $\CB$. (iii) Check whether $P$
contains exactly one strongly connected sink component and it is
trivial, \ie whether $\CB$ is a zero automaton (Lemma
\ref{lemma:synchronisation}).
It is well known that Hopcroft's automaton minimisation algorithm has 
an $\order(n \log n)$ time complexity and Tarjan's strongly connected components
algorithm has an $\order(n + n|A|) = \order(n)$ complexity where $n |A|$ means the
number of {\it edges}. Hence we can minimise $\CA$ to obtain $\CB$ in
 $\order(n \log n)$ on the step (i), and can calculate $P$ in $\order(n)$
 on the step (ii). One can easily verify that the step (iii) above can be
 done in $\order(n)$. To sum up, we have an $\order(n \log n)$ algorithm
 for testing whether a given  regular language obeys the
 zero-one law, if its is given by an $n$-states deterministic finite automaton. 
We can obtain, however, more efficient algorithm {\it by avoiding minimisation}.
In order to do that, there is a need for further investigation of
the structure of zero automata.\\

\sect{Quasi-zero automata and more effective algorithm}
Let $\CA = \langle Q, A, \cdot, q_0, F \rangle$ be an automaton.
The {\it Nerode equivalence} $\sim$ of $\CA$ is the relation defined on
$Q$ by $p \sim q$ if and only if $\future(p) = \future(q)$.
One can easily verify that $\sim$ is actually a congruence, in the sense
that $F$ is saturated by $\sim$ and $p \sim q$ implies $p \cdot w \sim
q \cdot w$ for all $w \in A^*$. Hence it follows that there is a well
defined new automaton $\CA\Q$, {\it the quotient automaton of}
$\CA$:
\[
 \CA\Q = \langle Q\Q, A, \cdot, [q_0]_{\sim}, F\Q \rangle
\]
where $[q]_{\sim}$ is the equivalence class modulo $\sim$ of $q$,
$S\Q = \{[q]_{\sim} \mid q \in S\}$ is the set of the equivalence
classes modulo $\sim$ of a subset $S \subseteq Q$, and where the
transition function $\cdot: Q\Q \times A \rightarrow Q\Q$
is defined by $[p]_\sim \cdot a = [p\cdot a]_\sim$.
We define the natural mapping $\phi_\sim: Q \rightarrow Q\Q$ by
$\phi_\sim(q) = [q]_\sim$.
Condition \eqref{condition:minimal} for minimal automata
implies that, for any automaton $\CA$, its quotient automaton $\CA\Q$
is the minimal automaton of $L(\CA)$. We shall identify
the quotient automaton $\CA\Q$ with the minimal automaton of $L(\CA)$ (\cf
\cite{Sakarovitch:2009:EAT:1629683}).

We now introduce a new class of automata which is a generalisation of
the class of zero automata.
\begin{definition}[quasi-zero automaton]\upshape
An automaton $\CA = \langle Q, A, \cdot, q_0, F \rangle$ is {\it
 quasi-zero} if either $\bigcup\sink(\CA)$ $\subseteq$ $F$ or
 $\bigcup\sink(\CA) \cap F = \emptyset$ holds.
\end{definition}
Since every zero automaton $\CA$ satisfies $\bigcup\sink(\CA) = \{p\}$ for a
certain state $p$ (Lemma \ref{lemma:synchronisation}), every zero
automaton is quasi-zero.
The following proposition shows that the minimal automaton of any
quasi-zero automaton is zero and {\it vice versa} (this justifies the
term ``quasi-zero'').

\begin{proposition}\label{prop:quasizero}
An automaton $\CA = \langle Q, A, \cdot, q_0, F \rangle$ is quasi-zero
 if and only if $\CA\Q$ is zero.
\end{proposition}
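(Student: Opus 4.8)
The plan is to combine Lemma \ref{lemma:synchronisation}, which says that an automaton is zero exactly when it has a unique strongly connected sink component and that component is trivial, with the single structural fact that the projection $\phi_\sim: Q \to Q\Q$ is an automaton morphism, satisfying $\phi_\sim(q\cdot a) = \phi_\sim(q)\cdot a$. Because $\phi_\sim$ commutes with transitions, it carries strongly connected sets to strongly connected sets and sink-closed sets to sink-closed sets; consequently the image $\phi_\sim(P)$ of any strongly connected sink component $P$ of $\CA$ is again strongly connected and sink-closed in $\CA\Q$. I will repeatedly use two elementary facts about the minimal automaton $\CA\Q$: from every state one can reach some strongly connected sink component, and a state of $\CA\Q$ has future $A^*$ [resp.\ $\emptyset$] if and only if it is a final [resp.\ non-final] sink state (one with $p\cdot a = p$ for all $a$). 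By Condition \eqref{condition:minimal} there is therefore at most one sink state of each kind in $\CA\Q$.

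For the direction ``$\CA$ quasi-zero $\Rightarrow \CA\Q$ zero'', I assume without loss of generality $\bigcup\sink(\CA)\subseteq F$, the disjoint case being symmetric. If $P$ is any sink component of $\CA$ then $q\cdot w \in P \subseteq F$ for every $q\in P$ and every $w$, so $\future(q)=A^*$; hence all of $P$ collapses under $\phi_\sim$ to the unique state $p$ of $\CA\Q$ with $\future(p)=A^*$, which is a final sink state (such a $p$ exists since $\CA$ has at least one sink component). To see that $\{p\}$ is the only sink component of $\CA\Q$, I take any sink component $P'$ of $\CA\Q$ and any $[r]\in P'$, lift $[r]$ to some $q\in Q$, run $q$ inside $\CA$ to a sink component $P$ of $\CA$ via some word $w$, and observe $\phi_\sim(q\cdot w)=p$ while $[r]\cdot w \in P'$ since $P'$ is sink. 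Thus $p\in P'$, and as $p$ is a sink state the strong connectedness of $P'$ forces $P'=\{p\}$. Therefore $\sink(\CA\Q)=\{\{p\}\}$ and $\CA\Q$ is zero by Lemma \ref{lemma:synchronisation}.

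For the converse, suppose $\CA\Q$ is zero, so $\sink(\CA\Q)=\{\{p\}\}$; I assume $p$ is final (the non-final case is symmetric), whence $\future(p)=A^*$. For any sink component $P$ of $\CA$, the image $\phi_\sim(P)$ is strongly connected and sink-closed in $\CA\Q$, so it contains a sink component of $\CA\Q$, namely $\{p\}$; since $p$ is a sink state and $\phi_\sim(P)$ is strongly connected, $\phi_\sim(P)=\{p\}$. Hence every $q\in P$ satisfies $\future(q)=\future(p)=A^*$, so $\varepsilon\in\future(q)$ and $q\in F$. This yields $\bigcup\sink(\CA)\subseteq F$, so $\CA$ is quasi-zero.

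The main obstacle is verifying that $\phi_\sim$ really respects the sink structure in both directions: that images of sink components are sink components, and that no sink component of $\CA\Q$ fails to be the image of one from $\CA$. Both points reduce to the observation that $\phi_\sim$ commutes with the transition function, together with the fact that a strongly connected set meeting a genuine sink state must coincide with that state. Once this bookkeeping is in place, the argument splits cleanly along the two cases (all sink states final versus all non-final) dictated by the definition of quasi-zero, and the only genuinely automaton-theoretic input beyond Lemma \ref{lemma:synchronisation} is the characterisation of the futures of sink states in the minimal automaton.
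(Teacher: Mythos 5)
Your proof is correct and follows essentially the same route as the paper: both directions rest on the facts that $\phi_\sim$ commutes with transitions (so images of strongly connected sink components of $\CA$ are again strongly connected sink components of $\CA\Q$), that sink states of the minimal automaton are exactly the states with future $A^*$ or $\emptyset$, and on Lemma \ref{lemma:synchronisation}. The only difference is cosmetic: you argue both inclusions directly (lifting a state of a sink component of $\CA\Q$ and running it forward into a sink component of $\CA$), where the paper phrases the same content as proofs by contradiction via preimages.
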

\begin{proof}
This proposition shows exactly the equivalence $\con{za} \Leftrightarrow
 \con{quasi}$ in Theorem \ref{thm:zero}.\\

\sect{$\con{za} \Rightarrow \con{quasi}$ ($\CA\Q$ is zero $\Rightarrow \CA$ is quasi-zero)}
Let $p$ be the unique sink state of $\CA\Q$.
To prove this direction, it is enough to consider the case when
 $p \in F\Q$, \ie $\future(p) = A^*$.
We now show
\begin{eqnarray}
  \bigcup\sink(\CA) \subseteq F \label{eq:qac}
\end{eqnarray}
 by contradiction.
Let us assume that Inclusion \eqref{eq:qac} does not hold, that is, we
 assume there exists a non-final state $q$ in $\bigcup\sink(\CA)$.
 Let $P$ be the strongly connected sink component of $\CA$ that
 contains $q$. Since $P$ is sink and strongly connected, $\phi_\sim(P)$ is
 sink and strongly connected in $\CA\Q$ too.
 Moreover, $\phi_\sim(P)$ does not contain the sink state $p$, because
 $q \notin F$ implies that, for any state $q'$ in $P$, $\future(q') \neq A^*$
 from which we obtain $\future([q']_\sim) \neq \future(p)$ and $[q']_\sim \neq
 p$. That is, $\CA\Q$ has at least two strongly connected sink
 components $\phi_\sim(P)$ and $p$. This is contradiction.\\

\sect{$\con{quasi} \Rightarrow \con{za}$ ($\CA$ is quasi-zero $\Rightarrow \CA\Q$ is zero)}
To prove this direction, it is enough to consider the case when
 $\bigcup\sink(\CA) \subseteq F$.
Since $\CA$ is quasi-zero, all states in $\bigcup\sink(\CA)$ have the same
 future $A^*$, \ie $\future(q) = A^*$ for every state $q$ in
 $\bigcup\sink(\CA)$, because $\bigcup\sink(\CA) \subseteq F$ implies $q
 \cdot w \in F$ for every state $q$ in $\bigcup\sink(\CA)$ and every word $w$
 in $A^*$.
 This implies that $\bigcup\sink(\CA)\Q$ consists of a
 single equivalence class, say $p$.
 Moreover, this equivalence class $p$ is a sink state in $\CA\Q$ by the
 definition of sink and Condition \eqref{condition:minimal} of the
 minimality of $\CA\Q$.
We now show that, by contradiction, $\CA\Q$ has only one strongly
 connected sink component $p$:
\begin{eqnarray}
  \bigcup\sink(\CA\Q) = \{p\} \label{eq:caq}
\end{eqnarray}
 from which we obtain $\CA\Q$ is zero by Lemma \ref{lemma:synchronisation}.
Let us assume that Inclusion \eqref{eq:caq} does not hold, that is, we
 assume there exists a strongly connected sink component $R = \{r_1,
 \cdots, r_n\}$ of $\CA\Q$, which does not contain $p$. Recall that each state $r_i$ of
 $\CA\Q$ is an equivalence class, \ie a set of states, of $\CA$.
 Let $S = \phi_\sim^{-1}(R)$ be a set of states of $\CA$.
 Since $R$ is strongly connected sink component of $\CA\Q$, its preimage
 $S$ contains at least one strongly connected sink component, say $P$,
 of $\CA$.
 For every state $q$ in $P$, $\future(q)$ is not equal to $A^* = \future(p)$,
 because $p \notin \phi_\sim(P) \subseteq \phi_\sim(S) = R$ implies $[q]_\sim \neq p$.
 This contradicts with the assumption that $\future(q) = A^*$ for every state $q$ in
 $\bigcup\sink(\CA)$.
 This completes the proof of Theorem \ref{thm:zero}.
\end{proof}

By using this proposition, we obtain a linear time algorithm by
avoiding minimisation as stated in the following theorem.
\begin{theorem}\label{thm:algorithm}
There is an $\order(n)$ algorithm for testing whether a given regular
 language is zero-one, if its is given by an $n$-states deterministic
 finite automaton.
\end{theorem}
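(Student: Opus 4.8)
The plan is to reduce the test directly to the quasi-zero condition, thereby avoiding the minimisation step altogether. By Proposition \ref{prop:quasizero} together with the equivalence $\con{za} \Leftrightarrow \con{zeroone}$ of Theorem \ref{thm:zero}, a regular language $L(\CA)$ obeys the zero-one law if and only if $\CA\Q$ is zero, which holds if and only if $\CA$ itself is quasi-zero. The crucial observation is that quasi-zero is a purely structural property of $\CA$: it only requires inspecting $\bigcup\sink(\CA)$ and the final states $F$, neither of which depends on having the minimal automaton in hand. Thus the whole problem collapses to computing the strongly connected sink components of $\CA$ and performing a single scan over their states.

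Concretely, the algorithm I would give proceeds in three phases. First, run Tarjan's algorithm to compute the strongly connected components of $\CA$; since $\CA$ has $n$ states and $n|A|$ edges, this costs $\order(n + n|A|) = \order(n)$. Second, identify $\sink(\CA)$: forming the condensation DAG, a strongly connected component is a sink component exactly when it has out-degree zero in the condensation, and these can all be located by a single linear sweep over the edges. Third, collect the union $\bigcup\sink(\CA)$ and scan it once to decide whether $\bigcup\sink(\CA) \subseteq F$ or $\bigcup\sink(\CA) \cap F = \emptyset$ holds. If either does, $\CA$ is quasi-zero and $L(\CA)$ is zero-one; otherwise it is not. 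By the Remark following the proof of Theorem \ref{thm:zero}, the first case yields $\mu(L) = 1$ and the second $\mu(L) = 0$, so the algorithm can also report which.

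Each of the three phases runs in time linear in the size of $\CA$, so the total complexity is $\order(n)$, improving on the $\order(n \log n)$ obtained by the minimise-then-test approach of the previous section. I expect the only delicate point to be the second phase: one must isolate precisely the sink strongly connected components and their union while staying within the linear budget. This is handled cleanly by the condensation, in which sink components are exactly the out-degree-zero vertices; the correctness of the whole reduction rests on Proposition \ref{prop:quasizero}, which is what licenses bypassing minimisation in the first place.
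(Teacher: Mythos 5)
Your proposal is correct and follows essentially the same route as the paper: both reduce the test to checking the quasi-zero condition on $\CA$ directly (via Proposition \ref{prop:quasizero} and Theorem \ref{thm:zero}), compute the strongly connected sink components in linear time, and then scan $\bigcup\sink(\CA)$ against $F$. Your extra details on the condensation DAG and on reporting whether $\mu(L)$ is $0$ or $1$ are harmless refinements of the same argument.
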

\begin{proof}
For a given $n$-states automaton $\CA$, we can determine whether
$L(\CA)$ obeys the zero-one law by the following steps: (i) Calculate the
 family of all strongly connected components $P$ of $\CA$.
(ii) Extract all strongly connected sink components from $P$ to obtain
 $\sink(\CA)$.
(iii)  Check whether, in $\bigcup\sink(\CA)$, either all states are
 final or all states are non-final, \ie whether $\CA$ is quasi-zero.
 By Theorem \ref{thm:zero}, $L(\CA)$ obeys the zero-one law if and only
 if $\CA$ is quasi-zero. Hence this algorithm is correct. All steps (i)
 $\sim$ (iii) can be done in $\order(n)$, this ends the proof.
\end{proof}

\section{Logical aspects of the zero-one law}\label{logics}
There are different manners to define a language: a set of {\it finite
words}.
In the descriptive approach, the words of a language are characterised
by a property. The automata approach is a special case of the
descriptive approach.
Another variant of the descriptive approach consists in defining
languages by logical formulae: we regard words as {\it finite structures
with a linear order composed of a sequence of positions labeled over
finite alphabet}.
The zero-one law, which is defined in this paper, has been studied
extensively in finite model theory (\cf Chapter 12 ``Zero-One Laws'' of
\cite{Libkin:2004:EFM:1024196}).
This notion can be applied to logics over, not only finite words,
but also arbitrary {\it finite structures}, such as {\it finite graphs}:
we regard graphs as finite structures with a set of nodes and their edge
relation.
We say that a logic ${\cal L}$, over fixed finite structures, has the
zero-one law if every property $\Phi$ definable in ${\cal L}$ satisfies
$\mu(\Phi) \in \{0, 1\}$ ($\mu$ is defined analogously).
Broadly speaking, every property $\Phi$ is either {\it almost surely
true} or {\it almost surely false}.
Fagin's theorem \cite{DBLP:journals/jsyml/Fagin76} states that {\it
first-order logic} $\FO$ for finite graphs has the zero-one
law. Moreover, an $\FO$ sentence $\Phi$ is almost surely true (\ie
$\mu(\Phi) = 1$) if and only if $\Phi$ is true on a certain infinite
graph: the {\it random graph}.
This characterisation leads to the fact that, for any $\FO$ sentence $\Phi$,
it is decidable whether $\mu(\Phi) = 1$ (\cf Corollary 12.11 in
\cite{Libkin:2004:EFM:1024196}).
After the work of Fagin, much ink has been spent on the zero-one law for
logics over finite graphs. It is now known that many logics (\eg {\it
logic with a fixed point operator} \cite{DBLP:journals/iandc/BlassGK85},
{\it finite variable infinitary logic} \cite{Kolaitis1992258} and
certain fragments of {\it second-order logic}
\cite{conf/mfcs/KolaitisV00}) have the zero-one law.

By contrast, though many logics have the zero-one law, their extensions
with ordering (like as logics over finite words), no longer have it.
In fact, over both finite graphs and finite words, while first-order
logic $\FO$ has the zero-one law, its extension with a linear order
$\FO[<]$ does not.

\begin{example}\label{ex:logic}\upshape
A simple counterexample is the language $(aA)^*$ which can be defined by
the $\FO[<]$ sentence $\Phi_{aA^*} = \exists i \left( \forall
j (i < j) \land P_a(i)  \right).$
The variables $i$ and $j$ of this sentence represent {\it position} in a
word. The sentence $P_a(i)$ is interpreted to mean ``the $i$-th letter
is $a$''. This language $aA^*$ satisfies $\mu_n(aA^*) = 1 / |A|$ as
we stated in Section \ref{introduction}, hence $\Phi_{aA^*}$ does not
obey the zero-one law in general.
It follows that $\FO[<]$ for finite words does not have the zero-one
law. 
\end{example}

We summarise well known logical and algebraic characterisations of
classes of languages, including the class of zero-one languages $\ZO$, in
Figure \ref{fig:separation}.
Details and full proofs of these results can be found in a very nice
survey \cite{DGK-ijfcs08} by Diekert {\it et al}. 
In Figure \ref{fig:separation}, we use standard abridged notation:
$\FO^n[<]$ for first-order logic with $n$ variables; $\Sigma_n[<]$ for
$\FO$ formulae with $n$ blocks of quantifiers and starting with a block
of existential quantifiers; $\bool\Sigma_n[<]$ for the Boolean closure
of $\Sigma_n[<]$.
A {\it monomial} over $A$ is a language of the form $A_0^* a_1 A_1^* a_2
\cdots a_k A_k^*$ where $a_i$ in $A$ and $A_i \subseteq A$ for each $i$,
and is {\it unambiguous} if for all $w \in A_0^* a_1 A_1^* a_2
\cdots a_k A_k^*$ there exists exactly one factorisation $w = w_0 a_1
w_1 a_w \cdots a_k w_k$ with $w_i$ in $A_i^*$ for each $i$.
A language $L$ over $A$ is called:
\begin{itemize}
\item  {\it star-free} if it is expressible by union, concatenation and
	   complement, but does not use Kleene star;
 \item  {\it polynomial} if it is a finite union of monomials;
 \item  {\it unambiguous polynomial} if it is a finite disjoint union of
  unambiguous monomials;
 \item  {\it piecewise testable} if it is a finite Boolean combination
		of simple polynomials;
 \item  {\it simple polynomial} if it
is a finite union of languages of the form $A^* a_1 A^* a_2 \cdots a_k
A^*$.
\end{itemize}

\begin{figure}[t]
\begin{minipage}[t]{0.55\linewidth}
\centering{\renewcommand\arraystretch{1.5}
\scriptsize{\begin{tabular}{ccc}
\Hline
Languages & Monoids & Logic \\\Hline
regular & finite & $\MSO[<]$\\
star-free & {\it aperiodic}  & $\FO[<]$\\
polynomials &  & $\Sigma_2[<]$\\
unambiguous polynomials & $\bold{DA}$ & $\FO^2[<]$\\\hline
zero-one  & with zero & ? \\\hline
piecewise testable & ${\cal J}$-trivial & $\bool \Sigma_1[<]$ \\
simple polynomial & & $\Sigma_1[<]$ \\
$\bool \{A^* \mid A \subseteq \Sigma \}$ & commutative and idempotent & $\FO^1[<]$ \\
\Hline
\end{tabular}}}
\end{minipage}
\begin{minipage}{0.45\linewidth}
\vspace{7mm}
 \centering\includegraphics[width=1\columnwidth]{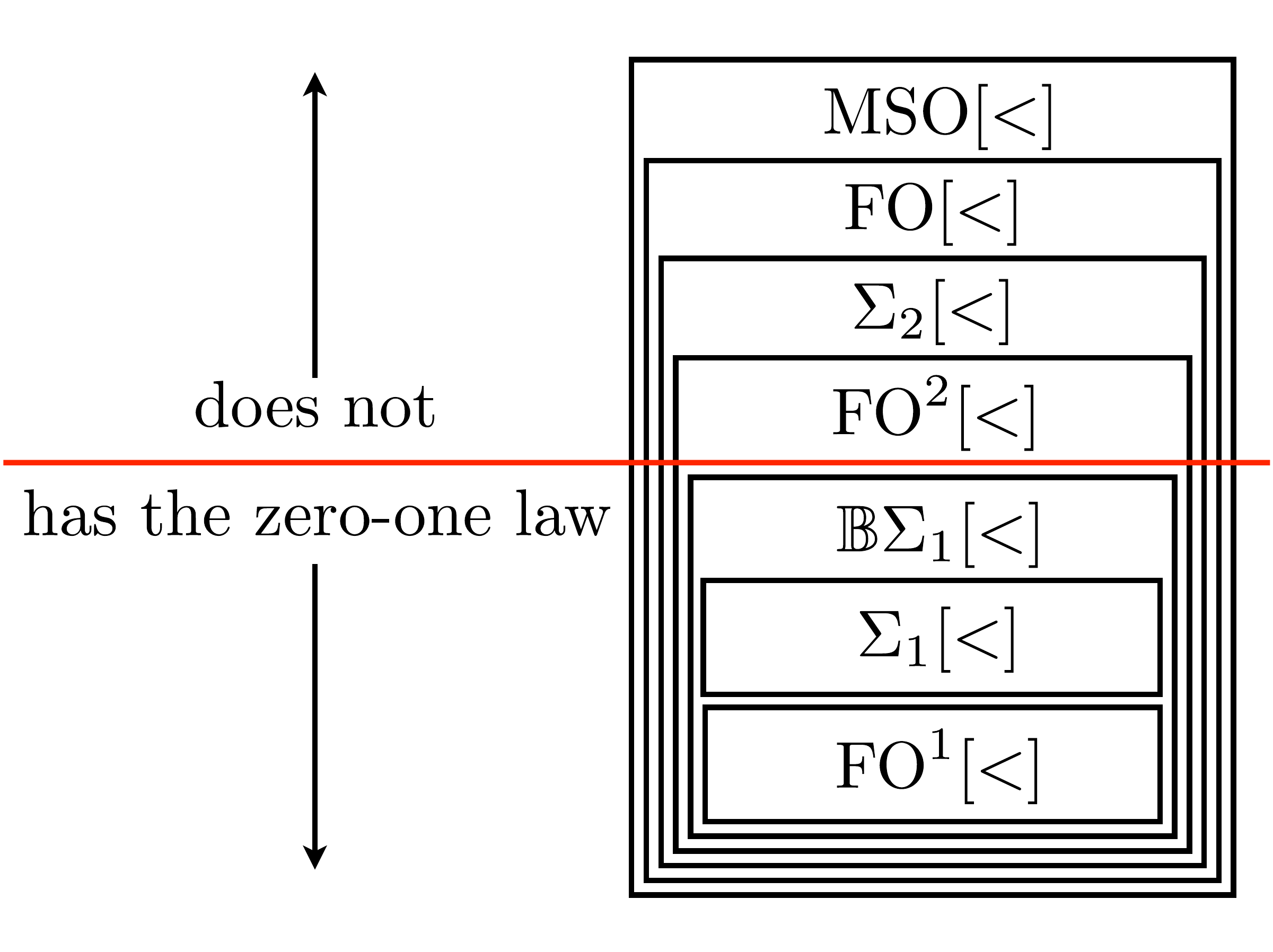}
\end{minipage}
 \caption{Logical and algebraic characterisations of well known
 subclasses of regular languages.}
 \label{fig:separation}
\end{figure}

The question then arises as to {\it which fragments of $\FO[<]$ over
finite words have the zero-one law}.
The algebraic characterisation of the zero-one law partially answers
this question.
Since every $\cal J$-{\it trivial} syntactic monoid has a zero element
(\cf \cite{MPRI}), Theorem \ref{thm:zero} leads to the following corollary.
\begin{corollary}\label{cor:logic}
 The Boolean closure of existential first-order logic over finite words
 has the zero-one law.
\end{corollary}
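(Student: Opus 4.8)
The plan is to combine the logic/language/monoid correspondences collected in Figure \ref{fig:separation} with the implication $\con{zero} \Rightarrow \con{zeroone}$ of Theorem \ref{thm:zero}, and only the forward direction of each correspondence is actually needed. Fix a finite alphabet $A$ and let $\Phi$ be an arbitrary sentence of the Boolean closure of existential first-order logic $\bool\Sigma_1[<]$; write $L(\Phi)$ for the language over $A$ that it defines. By the definition of the zero-one law for a logic, it suffices to show $\mu(L(\Phi)) \in \{0,1\}$ for every such $\Phi$.

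First I would read off the $\bool\Sigma_1[<]$ row of Figure \ref{fig:separation} (\cf \cite{DGK-ijfcs08}), which records Simon's characterisation: every language definable in $\bool\Sigma_1[<]$ is piecewise testable, and every piecewise testable language has a ${\cal J}$-trivial syntactic monoid. Hence the syntactic monoid $M$ of $L(\Phi)$ is ${\cal J}$-trivial. Next I would invoke the algebraic fact cited from \cite{MPRI} that every finite ${\cal J}$-trivial monoid possesses a zero: such a monoid has a unique minimal two-sided ideal, which is a single ${\cal J}$-class and hence, by ${\cal J}$-triviality, a singleton — and this singleton is precisely the zero element. Consequently $M$ has a zero, \ie $L(\Phi)$ is with zero.

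Finally I would apply the implication $\con{zero} \Rightarrow \con{zeroone}$ of Theorem \ref{thm:zero}: being a language with zero, $L(\Phi)$ obeys the zero-one law, so $\mu(L(\Phi)) \in \{0,1\}$. Since $\Phi$ ranged over all of $\bool\Sigma_1[<]$, the fragment has the zero-one law. I do not expect a genuine obstacle, since the statement is an immediate corollary of Theorem \ref{thm:zero} once the monoid-theoretic input is in place; the only step deserving care is the passage from ${\cal J}$-triviality to the existence of a zero, which is exactly where the hypothesis on the logic is consumed and which one should cite precisely (or verify via the unique-minimal-ideal argument sketched above).
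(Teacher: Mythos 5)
Your proposal is correct and follows exactly the paper's route: $\bool\Sigma_1[<]$ definability gives piecewise testability, hence a ${\cal J}$-trivial syntactic monoid, hence a zero element, and then the implication \con{zero} $\Rightarrow$ \con{zeroone} of Theorem \ref{thm:zero} concludes. Your added justification that the unique minimal ideal of a finite ${\cal J}$-trivial monoid is a singleton (and therefore a zero) is a correct expansion of the step the paper merely cites from \cite{MPRI}.
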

One can easily verify that the sentence $\Phi_{aA^*}$ in example
\ref{ex:logic}, which only uses two variables $i$ and $j$, is in
$\FO^2[<]$. It follows that $\FO^2[<]$ does not have the zero-one law,
hence Corollary \ref{cor:logic} shows us a ``separation line'' (red line
in Figure \ref{fig:separation}).
It must be noted that the class of zero-one languages $\ZO$ and unambiguous
polynomials are incomparable.
To take a simple example, consider two languages $(aa)^*$ and $aA^*$
over $A = \{a, b\}$.
The language $(aa)^*$ is zero-one but not unambiguous polynomial since
its syntactic monoid is not {\it aperiodic} (\ie having no nontrivial
subgroup).
Conversely, $aA^*$ is not zero-one but unambiguous polynomial since it
is definable in $\FO^2[<]$ as we have stated in Example \ref{ex:logic}.
An interesting open problem is whether there exists a logical fragment
that exactly captures the zero-one law.
\section{Related works}\label{conclusion}
The notion of probability $\mu_n$ for regular languages has been studied
by Berstel \cite{berstel:hal-00619884} from 1973, and by Salomaa and
Soittola \cite{Salomaa:1978:ATA:578607} from 1978 in the context of the
{\it theory of formal power series}. They proved that $\mu_n(L)$ has
finitely many accumulation points and each accumulation point is
rational.
Another approach, based on {\it Markov chain theory}, was presented by
Bodirsky {\it et al}. \cite{Bodirsky}. 
They investigate the algorithmic complexity of computing accumulation
points of $L$ and introduced an $\order(n^3)$ algorithm to compute
$\mu(L)$ for any regular language $L$ (and hence whether $L$ is
zero-one), if $L$ is given by an $n$-states deterministic finite
automaton.

A similar notion, {\it density} of a language have also been
 studied in {\it algebraic coding theory} (\cf
 \cite{opac-b1092233,Berstel:2009:CA:1708078}).
A {\it probability distribution} $\pi$ on $A^*$
is a function $\pi: A^* \rightarrow [0,1]$ such that $\pi(\epsilon) = 1$
 and $\sum_{a \in A} \pi(wa) = \pi(w)$ for all $w$ in $A^*$. As a
 particular case, a {\it Bernoulli distribution} is a morphism from
 $A^*$ into $[0,1]$ such that $\sum_{a \in A} \pi(a) = 1$. Clearly, a
 Bernoulli distribution is a probability distribution.
We denote by $A^{(n)} = A^0 \cup A \cup \cdots \cup A^{n-1}$ the set of
all words of length less than $n$ over a finite alphabet $A$.
The {\it density} $\delta(L)$ of $L$ is a limit defined by
\[
 \delta(L) = \lim_{n \rightarrow \infty} \frac{1}{n} \pi\left(L \cap A^{(n)}\right)
\]
where $\pi$ is a probability distribution on $A^*$. 
A monoid $M$ is called {\it well founded} if it has a unique minimal
ideal, if moreover this ideal is the union of the minimal left ideals of
$M$, and also of the minimal right ideals, and if the intersection of a
minimal right ideal and of a minimal left ideal is a finite group.
An elementary result from analysis shows that if the sequence $\pi(L
\cap A^n)$ has a limit, then $\delta(L)$ also has a limit, and both are
equal. The converse, however, does not hold (\eg $\delta((AA)^*) = 1/2$).
In their book \cite{Berstel:2009:CA:1708078}, Berstel {\it et al}. proved
Theorem 13.4.5 which states that, for any well founded monoid $M$ and
morphism $\phi: A^* \rightarrow M$, $\delta(\phi^{-1}(m))$ has a
limit for every $m$ in $M$. Furthermore, this density is non-zero if and
only if $m$ in the minimal ideal $K$ of $M$ from which we obtain
$\delta(\phi^{-1}(K)) = 1$.
Since every monoid with zero is well founded, Theorem 13.4.5 implies
that, every language with zero is zero-one (\ie $\con{zero} \Rightarrow
\con{zeroone}$, ``easy part'' of our Theorem \ref{thm:zero}).
Some other related results can be found in the {\it theory of
probabilities on algebraic structures} initiated by Grenander
\cite{opac-b1108561} and Martin-L\"of \cite{Martin-Lof:1965:PTD}.

The point to observe is that the techniques presented in this paper are
purely automata theoretic. We did not use any probability theoretic
tools, like as measure theory, formal power series, Markov chain, algebraic coding
theory, {\it etc}. This point deserves explicit emphasise.\\

\sect{Acknowledgement}
I wish to thank the anonymous reviewers for their valuable comments and
suggestions to improve the quality of the paper, especially, who
informed me the previous works in algebraic coding theory (Theorem
13.4.5 in \cite{Berstel:2009:CA:1708078}).
Special thanks also go to Prof. Yasuhiko Minamide (Tokyo Institute of
Technology) whose meticulous comments for Lemma \ref{lemma:finalstates}
were an enormous help to me.
I am grateful to Prof. Jacques Sakarovitch (T\'el\'ecom ParisTech)
whose comments and suggestions (and his excellent book \cite{Sakarovitch:2009:EAT:1629683}) were
innumerably valuable throughout the course of my study.
This work was supported by JSPS KAKENHI Grant Number $26 \cdot 11962$.

\newpage
\bibliographystyle{eptcs}
\bibliography{ref}

\end{document}